\documentclass[11pt,reqno]{amsart} 
\usepackage{amssymb,amsthm,amsmath,epsfig,latexsym}
\usepackage{graphicx}
\usepackage{calc,times}
\usepackage{tikz}
\usetikzlibrary{matrix}
\usepackage[all]{xy}

\usepackage{geometry} 
\geometry{margin=1in} 



\begin{document}


\newcommand{\B}{{\mathcal B}}
\newcommand{\dee}{{\mathrm d}}
\newcommand{\bl}{{\mathrm Bl}}
\newcommand{\rees}{{R[It]}}
\newcommand{\fiber}{{\mathcal{F}}(I)}
\newcommand{\cone}{{\mathcal{G}}(I)}
\newcommand{\blowup}{Bl}
\newcommand{\pp}{\mathbb P}
\newcommand{\recip}{{\mathcal{L}}^{-1}_{\A}}
\newcommand{\gas}{{\mathcal{G}}}
\newcommand{\fm}{\mathfrak m}
\newcommand{\xx}{\mathbf x}
\newcommand{\mmbox}[1]{\mbox{${#1}$}}
\newcommand{\proj}[1]{\mmbox{{\mathbb P}^{#1}}}
\newcommand{\Cr}{C^r(\Delta)}
\newcommand{\CR}{C^r(\hat\Delta)}
\newcommand{\affine}[1]{\mmbox{{\mathbb A}^{#1}}}
\newcommand{\Ann}[1]{\mmbox{{\rm Ann}({#1})}}
\newcommand{\caps}[3]{\mmbox{{#1}_{#2} \cap \ldots \cap {#1}_{#3}}}
\newcommand{\N}{{\mathbb N}}
\newcommand{\Z}{{\mathbb Z}}
\newcommand{\R}{{\mathbb R}}
\newcommand{\Tor}{\mathop{\rm Tor}\nolimits}
\newcommand{\Ext}{\mathop{\rm Ext}\nolimits}
\newcommand{\Hom}{\mathop{\rm Hom}\nolimits}
\newcommand{\im}{\mathop{\rm Im}\nolimits}
\newcommand{\rank}{\mathop{\rm rank}\nolimits}
\newcommand{\supp}{\mathop{\rm supp}\nolimits}
\newcommand{\arrow}[1]{\stackrel{#1}{\longrightarrow}}
\newcommand{\CB}{Cayley-Bacharach}
\newcommand{\coker}{\mathop{\rm coker}\nolimits}
\newcommand{\m}{{\frak m}}
\newcommand{\fitt}{{\rm Fitt}}
\newcommand{\C}{{\mathcal{C}}}
\newcommand{\K}{\mathbb K}
\newcommand{\OT}{R}  
\newcommand{\pan}{{\rm Span }}  
\newcommand{\M}{\mathsf M}
\newcommand{\Ima}{{\rm Im}\,}


\makeatletter
\renewcommand*\env@matrix[1][*\c@MaxMatrixCols c]{%
  \hskip -\arraycolsep
  \let\@ifnextchar\new@ifnextchar
  \array{#1}}
\makeatother


\sloppy
\newtheorem{defn0}{Definition}[section]
\newtheorem{prop0}[defn0]{Proposition}
\newtheorem{quest0}[defn0]{Question}
\newtheorem{thm0}[defn0]{Theorem}
\newtheorem{lem0}[defn0]{Lemma}
\newtheorem{corollary0}[defn0]{Corollary}
\newtheorem{example0}[defn0]{Example}
\newtheorem{remark0}[defn0]{Remark}
\newtheorem{prob0}[defn0]{Problem}
\newtheorem{conj0}[defn0]{Conjecture}

\newenvironment{defn}{\begin{defn0}}{\end{defn0}}
\newenvironment{prop}{\begin{prop0}}{\end{prop0}}
\newenvironment{quest}{\begin{quest0}}{\end{quest0}}
\newenvironment{thm}{\begin{thm0}}{\end{thm0}}
\newenvironment{lem}{\begin{lem0}}{\end{lem0}}
\newenvironment{cor}{\begin{corollary0}}{\end{corollary0}}
\newenvironment{exm}{\begin{example0}\rm}{\end{example0}}
\newenvironment{rem}{\begin{remark0}\rm}{\end{remark0}}
\newenvironment{prob}{\begin{prob0}\rm}{\end{prob0}}
\newenvironment{conj}{\begin{conj0}}{\end{conj0}}

\newcommand{\defref}[1]{Definition~\ref{#1}}
\newcommand{\propref}[1]{Proposition~\ref{#1}}
\newcommand{\thmref}[1]{Theorem~\ref{#1}}
\newcommand{\lemref}[1]{Lemma~\ref{#1}}
\newcommand{\corref}[1]{Corollary~\ref{#1}}
\newcommand{\exref}[1]{Example~\ref{#1}}
\newcommand{\secref}[1]{Section~\ref{#1}}
\newcommand{\remref}[1]{Remark~\ref{#1}}
\newcommand{\questref}[1]{Question~\ref{#1}}
\newcommand{\probref}[1]{Problem~\ref{#1}}

\newcommand{\std}{Gr\"{o}bner}
\newcommand{\jq}{J_{Q}}
\def\Ree#1{{\mathcal R}(#1)}

\numberwithin{equation}{subsection}  


\title{Generalized Hamming weights and minimal shifts of Orlik-Terao algebras}
\author{\c{S}tefan O. I. Toh\v{a}neanu}

\dedicatory{To my father, Ion V. Toh\v{a}neanu}

\subjclass[2020]{Primary 13P25; Secondary: 14N20, 52C35, 68W30} \keywords{generalized Hamming weights, graded Betti numbers, Orlik-Terao algebra, initial degree. \\ \indent Tohaneanu's Address: Department of Mathematics, University of Idaho, Moscow, Idaho 83844-1103, USA, Email: tohaneanu@uidaho.edu, Phone: 208-885-6234, Fax: 208-885-5843.}

\begin{abstract} In this note we show that the minimum distance of a linear code equals one plus the smallest shift in the first step of the minimal graded free resolution of the Orlik-Terao algebra (i.e., the initial degree of the Orlik-Tearo ideal) constructed from any parity-check matrix of the linear code. We move forward with this connection and we prove that the second generalized Hamming weight equals one or two plus the smallest shift at second step in the minimal graded free resolution of the same algebra. Via a couple of examples we show that this ambivalence is the best result one can get if one uses Orlik-Terao algebras to characterize the second generalized Hamming weight.
\end{abstract}
\maketitle

\section{Introduction}

The minimum distance $d$ of a linear code $\C$ is the most important parameter associated to $\C$ because it controls the error-correction capabilities of $\C$. Unfortunately, in general $d$ is very difficult to calculate; by \cite{Va}, computing $d$ is NP-hard.

Since the early development of what now are known as Algebraic Geometric codes, and all their commutative algebraic relatives (Reed-Solomon codes, Reed-Muller codes, affine variety codes, evaluation codes, etc.), there has been a growing interest into understanding and maybe even identifying the minimum distance with some other algebraic invariants, some more classical than the others. Furthermore, with the beginning of using Gr\"{o}bner bases to error-correct corrupted message received (see \cite{Aug}, \cite{BuPe2}), the attention turned almost entirely towards studying ideals in rings of polynomials in several variables with coefficients in a field.

One of the algebraic invariants that we mentioned above is the {\em initial degree} (some authors call it, the {\em $\alpha$-invariant}) of a positively graded finitely generated module over a standard graded Noetherian ring; this number captures the minimum degree of a nonzero element of the graded module. \cite{GaTo} explored this connection in details, as follows: in \cite{DePe}, and further developed in \cite{To1}, the minimum distance can be calculated recursively by investigating the heights of several ideals constructed from a generator matrix of the linear code. It turns out that one can hide this recursive calculation by computing the initial degree of the Fitting module of $\C$ (see \cite[Theorem 2.1]{GaTo}), or of the associated graded module corresponding to some filtration that occurs for the case of binary linear codes (see \cite[Theorem 3.7]{GaTo}). In the same article, the initial degree of apolar algebras (i.e., inverse systems) of hyperplane arrangements gives a lower bound for the minimum distance (\cite[Proposition 3.9]{GaTo}), and lastly, there is some connection between the minimum distance, the initial degree, and the linear strand of the Orlik-Terao algebra of the line arrangement constructed from a generator matrix of any dimension 3 linear code (\cite[Proposition 3.9]{GaTo}).

It was not an artificial stretch to look at the Orlik-Terao algebra of a hyperplane arrangement, except that \cite{GaTo} looked at the wrong hyperplane arrangement. In this note we will correct this point of view by looking at the Orlik-Terao algebra of the hyperplane arrangement defined by the linear forms dual to the columns of a {\em parity-check} matrix of $\C$ (instead of a generator matrix of $\C$).

In general, the Orlik-Terao algebra is relatively new in the field of commutative algebra (see \cite{OrTe}, \cite{Te}, or \cite{ScTo}; in other references it is called ``the algebra of the reciprocal plane''), and because of its features trying to mimic the celebrated Orlik-Solomon algebra, it seems to be specific only to the theory of hyperplane arrangements. But, by using \cite[Theorem 2.4]{GaSiTo}, this algebra is identified with the {\em fiber cone} (a.k.a. the {\em special fiber}) of an ideal associated to this hyperplane arrangement. This later graded algebra is a familiar presence in algebraic geometry and commutative algebra, since it is in direct connection to the famous ``resolution of singularities'' and ``blowups''. Using this isomorphism, the interested reader can translate all the results in this paper in fiber cone/special fiber terminology.

The point of view mentioned above, and presented in this note, comes as an immediate consequence of the characterization of the generalized Hamming weights using a parity-check matrix. The most important article who looked at this characterization but in connection to a different graded algebra, the Stanley-Reisner ring of the matroid of the parity-check matrix, is \cite{JoVe1}. This algebra is, in spirit, more combinatorial than algebraic, but is much simpler to define and to work with: the Stanley-Reisner ideal is monomial and square-free. Because of this, using \cite[Theorem 5.1]{Hoc}, the main result \cite[Theorem 4.2]{JoVe1} proves that for any $1\leq r\leq k$, where $k$ is the dimension of the linear code, the $r$-th generalized Hamming weight equals the smallest shift occurring at the $r$-th steps (i.e., $r$-th homological degree) in the minimal graded free resolution of the Stanley-Reisner ring of the matroid associated to a parity-check matrix of the code. On the same idea, the very recent work \cite{DiFoKuTo} shows that the generalized Hamming weights can be identified with the smallest degree of a square-free monomial in the corresponding symbolic powers of the same Stanley-Reisner ideal.

Unfortunately, the graded Betti numbers of the Orlik-Terao ideal are not known to benefit of a direct correspondence with the underlying combinatorics. It is relatively immediate to see that the initial degree of the Orlik-Terao ideal equals the minimum distance minus one (see, Theorem \ref{thm3}). But once we move further into the minimal graded free resolution, it becomes quite challenging to understand the syzygies of this ideal. Two main reasons for this are that these ideals are not combinatorial (see \cite[Example 1.4]{ScTo}), nor they are monomial (nor binomial), so they do not benefit from some computational advantages (such as monomial/binomial Gr\"{o}bner bases) that will produce somewhat ``standard'' free resolutions (see, for example, Taylor free resolution of a monomial ideal: \cite[Exercise 17.11]{Ei0}). One other disadvantage is that, to our knowledge, only a couple of articles turned their attention towards studying the free resolutions of Orlik-Terao ideals, in general: \cite[Section 3]{ScTo} presents some results concerning combinatorial first syzygies, and, the more elusive, \cite[Section 4.3]{GaSiTo} where the handling of syzygies of the Rees ideal (hence, of the Orlik-Terao ideal) is hidden behind some powerful homological techniques and results. As we will see later, even if we pass to the initial ideals of Orlik-Terao ideals, which are the Stanley-Reisner ideals of broken-circuit complexes, the graded Betti numbers depend on the monomial ordering of the variables (see Example \ref{exm12}). For $r=2$, in Theorem \ref{thm13} we show that we can place $d_2(\C)$ between $t+1$ and $t+2$, where $t$ is the smallest shift occurring at second step in the minimal graded free resolution of the Orlik-Terao ideal of $\mathcal A(\mathcal C^{\perp})$. Unfortunately, we cannot do better than these inequalities: Examples \ref{exm10} and \ref{exm15} show that both bounds can be attained. In Remark \ref{lastrem} we discuss briefly about higher generalized Hamming weights, and we conclude with a conjecture about a lower bound of these.

We started this article with the goal of identifying the generalized Hamming weights with homological invariants of yet another (commutative) algebra. In the end, we came to the conclusion that despite a very strong connection with the Stanley-Reisner algebras of two similar simplicial complexes constructed from the circuits of the same hyperplane arrangement, the shifts in the minimal graded free resolution of the Orlik-Terao algebra of that hyperplane arrangement have an unclear behaviour that needs to unravelled.

We end the introduction by mentioning \cite{MaMaSu}, where the similar concept of initial degree, but of a non-homogeneous ideal, comes in direct connection with the minimum distance. In \cite[Proposition 2]{MaMaSu}, the error-capability of the linear code $\C$ plus one equals the minimum degree of an element of a Gr\"{o}bner basis of some binomial ideal. This ideal is constructed from a generator matrix of $\C$, and takes into account the multiplication table of the base field.

\section{Minimum distance and initial degree of Orlik-Terao ideals}

\subsection{Linear codes.} Let $\mathbb K$ be any field, and let $n$ be a positive integer. A {\em linear code} $\C$ is a linear subspace of $\mathbb K^n$. $n$ is the {\em length} of $\C$, and $k:=\dim_{\mathbb K}(\C)$ is the {\em dimension} of $\C$.

For any vector ${\bf w}\in\mathbb K^n$, the {\em weight} of ${\bf w}$, denoted $wt({\bf w})$, is the number of nonzero entries in ${\bf w}$. The {\em minimum distance} of $\C$ is the integer: $$d=d(\C):=\min\{wt({\bf c})|{\bf c}\in\C\setminus\{{\bf 0}\}\}.$$ $n$, $k$, and $d$ are called the {\em parameters} of $\C$, and $\C$ is called an {\em $[n,k,d]$-linear code}.

A {\em generator matrix} of $\C$ is any $k\times n$ matrix whose rows form a basis for $\C$. If $G$ is such a matrix, then any element ${\bf c}$ of $\C$ (called {\em codeword}) is a linear combination of the rows of $G$, so ${\bf c}={\bf v}G$, for some ${\bf v}\in\mathbb K^k$. Using simple linear algebra we obtain that the transpose of any such ${\bf c}$ is in the kernel of an $(n-k)\times n$ matrix, called {\em parity-check matrix} of $\C$. Also, a code is called {\em non-degenerate} if any generator matrix $G$ doesn't have a zero column.

If $\C^{\perp}$ denotes the orthogonal dual of $\C$ in $\mathbb K^n$, then $\C^{\perp}$ is an $[n,n-k,d(\C^{\perp})]$-linear code, called the {\em dual code} of $\C$, and it has a generator matrix $H$, where $H$ is a parity-check matrix of $\C$.

\begin{rem}\label{rem1} Let $\C$ be an $[n,k,d]$-linear code with a generator matrix $G$. We have the following properties:
\begin{itemize}
  \item[(i)] If $G$ is in standard form, i.e., $G=\left[I_k|P\right]$, then $H:=\left[-P^T|I_{n-k}\right]$ is a parity-check matrix of $\C$. By the row reduction algorithm, and possibly after a permutation of the columns of $G$ (which produces a generator matrix of an {\em equivalent code} to $\C$; none-the-less, this ``new'' linear code has the same parameters as $\C$), any matrix $G$ can be brought to standard form.
  \item[(ii)] $H$ has a column consisting all of zeros iff $P$ has a row consisting all of zeros iff $d=1$. This is saying that $\C^{\perp}$ is non-degenerate iff $d\geq 2$.
\end{itemize}
\end{rem}

\medskip

\subsubsection{Generalized Hamming weights.} \label{sec:genHam} Let $\mathcal C$ be an $[n,k,d]$-linear code. Let $\mathcal D\subseteq \mathcal C$ be a subcode, which is just a linear subspace of $\C$. The support of $\mathcal D$ is $$Supp(\mathcal D):=\{i |\exists (x_1,\ldots,x_n)\in \mathcal D \mbox{ with }x_i\neq 0\}.$$ Let $m(\mathcal D):=|Supp(\mathcal D)|$ be the cardinality of the support of $\mathcal D$.

If ${\bf c}=(c_1,\ldots,c_n)\in\C$ is a codeword, similarly define the {\em support of ${\bf c}$} to be $supp({\bf c}):=\{i|c_i\neq 0\}$.

For any $r=1,\ldots,k$, the {\em $r$-th generalized Hamming weight of $\mathcal C$} is the positive number $$d_r(\mathcal C):=\min_{\mathcal D\subseteq \mathcal C,\,\dim\mathcal D=r}m(\mathcal D).$$ By convention, $d_0(\mathcal C) = 0$.

By \cite[Theorem 2]{We}, we have the following crucial result. Let $\mathcal C$ be an $[n,k]$-linear code with parity check matrix $H$. If $[n]:=\{1,\ldots,n\}$ and $i\in [n]$, denote by $H_i$ the $i$-th column of $H$. Let $1\leq r\leq k$ be an integer. Then

$$d_r(\mathcal C)=\min_{I\subseteq[n]}\{|I|\,|\, |I|-\dim_{\mathbb K}{\rm Span}_{\mathbb K}\{H_i\,|\, i\in I\}\geq r\}.$$

\begin{rem}\label{rem1.1} We have the following.

\begin{itemize}
\item[(i)] $d_1(\mathcal C)$ equals the minimum distance $d$ of $\mathcal C$; this is because all subcodes of $\C$ of dimension 1 are the linear spans (i.e., scalar multiples) of the nonzero codewords of $\C$.
\item[(ii)] $\C$ has minimum distance $d$ if and only if $d$ is the maximum integer such that any $d-1$ columns of any parity-check matrix of $\C$ are linearly independent.
\item[(iii)] The version of Wei's formula that uses a generator matrix $G$ of an $[n,k]$-linear code $\C$ instead, is the following: $d_r(\mathcal C)$ equals $n$ minus the maximum number of columns of $G$ that span a $k-r$ dimensional vector subspace of $\mathbb K^k$.
\end{itemize}
\end{rem}

\medskip

\subsection{Hyperplane arrangements.} Let $G$ be a $k\times n$ matrix with entries in a field $\mathbb K$; and suppose it doesn't have any column consisting entirely of zeroes. Let $R:=\mathbb K[x_1,\ldots,x_k]$ denote the graded ring of polynomials with the standard grading given by the degree; the zero polynomial has any degree.

To the $i$-th column $[a_{1,i},\ldots,a_{k,i}]^T$ of $G$ we associate the {\em dual linear form} $\ell_i:=a_{1,i}x_1+\cdots+a_{k,i}x_k\in R$. Then, for each $i=1,\ldots,n$ consider the hyperplane $H_i:=V(\ell_i)\subset\mathbb K^k$; this way we obtain a {\em (multi) hyperplane arrangement}. The prefix ``multi'' (that we will omit writing it) is justified by the possibility that maybe some of the columns of $G$ are proportional.

Suppose now that $G$ is a generator matrix of an $[n,k,d]$-linear code $\C$. Denote with $\mathcal A(\C)$ the hyperplane arrangement constructed from $G$ as we described above. This construction is reversible (modulo, ``monomial equivalence of linear codes''): starting with a hyperplane arrangement of $n$ hyperplanes in $\mathbb K^k$ and of rank $k$, by recording the coefficients of each defining linear form, one can form a $k\times n$ matrix of rank $k$ that can be thought of as a generator matrix of a linear code; this matrix is sometimes called {\em the coefficients matrix} of the hyperplane arrangement. If one permutes or rescales the columns of a matrix, on obtains the same hyperplane arrangement if we think of this matrix as the coefficients matrix, and one obtains different, yet equivalent, linear codes if we think of this matrix as a generator matrix.

Any hyperplane arrangement, or its coefficients matrix, comes with a matroid, and the connections between the combinatorics of this matroid and the properties of the associated linear code are very strong; see \cite{JuPe} for extended details.

\medskip

\subsubsection{The Orlik-Terao algebra.} Let $\mathcal A(\C)$ be the hyperplane arrangement associated to any generator matrix of an $[n,k,d]$-linear code $\C$. Suppose we fixed the linear forms $\ell_i\in R$. Let $S:=\mathbb K[y_1,\ldots,y_n]$ and consider the ring epimorphism $$\gamma:S\rightarrow R,\, \gamma(y_i)=\ell_i, 1\leq i\leq n.$$ The kernel of $\gamma$, denoted here $F({\mathcal C})$, is an ideal of $S$, minimally generated by $n-k$ linear forms in $S$. This is often called the \emph{relation space} of $\mathcal A(\C)$. The matrix of the coefficients of the standard basis of $F({\mathcal C})$ is an $(n-k)\times n$ matrix which is the generating matrix of the dual code $\mathcal C^{\perp}$ of $\C$.

Now let us define the ring epimorphism $$\phi: S\rightarrow \mathbb K[1/\ell_1,\ldots, 1/\ell_n],\, \phi(y_i)=1/\ell_i, 1\leq i\leq n.$$ {\em The Orlik-Terao algebra} is ${\rm OT}(\mathcal C):=\K[y_1,\ldots,y_n]/\ker(\phi)$. $\ker(\phi)$ is called {\em the Orlik-Terao ideal}, and it is denoted ${\rm IOT}(\mathcal C)$.

In these pages we will look at the Orlik-Terao algebra of $\mathcal A(\C^{\perp})$. If $d=1$, then by Remark \ref{rem1} (ii), any generator matrix of $\C^{\perp}$ will have a column consisting all of zeros. In this case we cannot construct the Orlik-Terao algebra since there is no hyperplane dual to the zero vector.

\begin{rem}\label{rem2} With the above notations we have the following.

\begin{itemize}
  \item[(i)] It is well known (see \cite{OrTe, Te, ScTo}) that ${\rm IOT}(\mathcal C)$ is generated by
$$\partial(a_1y_{i_1}+\cdots+a_uy_{i_u}):=\sum_{j=1}^{u}a_jy_{i_1}\cdots\widehat{y_{i_j}}\cdots y_{i_u},$$ where $a_1y_{i_1}+\cdots+a_uy_{i_u}, a_j\neq 0$ is an element in the relations space $F({\mathcal C})$.
  \item[(ii)] If the generator matrix is $G=[a_{i,j}], 1\leq i\leq k, 1\leq j\leq n$, then \cite[Proposition 2.6]{DeGaTo} implies that the Orlik-Terao ideal can be obtained as the colon ideal ${\rm IOT}(\C) = J : (y_1\cdots y_n)^{\infty}$, where $J$ is generated by the maximal minors of the following matrix
\[
 \left[\begin{array}{cccc}a_{1,1}y_1& \cdots &a_{1,n}y_n\\ \vdots  & \ddots & \vdots \\ a_{k,1}y_1& \cdots  &a_{k,n}y_n \\
 \hline
 1 & \cdots & 1\end{array}\right].
\]
\item[(iii)] \cite[Theorem 4]{PrSp} shows that the generators obtained at (i) from circuits in the matroid of a generator matrix of $\C$ form a universal Gr\"{o}bner basis for ${\rm IOT}(\mathcal C)$. With this, the initial ideal of ${\rm IOT}(\mathcal C)$ equals the Stanley-Reisner ideal of the ``broken-circuit complex'', which is similar to the Stanley-Reisner ideal considered in \cite{JoVe1}. For example, if $\{1,2,3\}$ is a circuit of the matroid (i.e., the first, second, and third columns of the matrix are linearly dependent), then $y_1y_2y_3$ is an element of the Stanley-Reisner ideal of the matroid simplicial complex. But also $\{1,2\}$, $\{1,3\}$ and $\{2,3\}$ are broken circuits, and depending on the chosen ordering of the variables, $y_1y_2$, or $y_1y_3$, or $y_2y_3$ belong to the Stanley-Reisner ideal of the corresponding broken-circuit complex.
\end{itemize}

\end{rem}
\medskip

The {\em initial degree} of a finitely generated graded module $M=\oplus_{i\geq 0}M_i$ over any positively graded ring, denoted $\alpha(M)$, is the smallest $i$ for which $M_i\neq 0$; in other words, it is the smallest degree of a generator of $M$. Sometimes, this is called the $\alpha$-invariant. With this definition we have the following

\begin{thm}\label{thm3} Let $\C$ be an $[n,k,d]$-linear code such that $d\geq 2$. Then $$d=\alpha({\rm IOT}(\C^{\perp}))+1.$$
\end{thm}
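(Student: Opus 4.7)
The plan is to show $\alpha({\rm IOT}(\C^{\perp})) = d - 1$ by explicitly identifying the generators of ${\rm IOT}(\C^{\perp})$ (and their degrees) with codewords of $\C$. Since $d \geq 2$, Remark \ref{rem1}(ii) ensures that $\C^{\perp}$ is non-degenerate, so both $\mathcal A(\C^{\perp})$ and its Orlik-Terao ideal are defined. Applying the remark preceding Remark \ref{rem2} with $\C$ replaced by $\C^{\perp}$, the relation space $F(\C^{\perp})$ is canonically identified, coefficient-wise, with the code $(\C^{\perp})^{\perp} = \C$: a linear form $\sum_i c_i y_i \in S_1$ lies in $F(\C^{\perp})$ if and only if $\mathbf{c} = (c_1,\ldots,c_n) \in \C$.

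Next, I would invoke Remark \ref{rem2}(i): ${\rm IOT}(\C^{\perp})$ is generated as an ideal of $S$ by the Orlik-Terao ``derivatives''
$$\partial\!\left(\sum_{j=1}^{u} c_{i_j} y_{i_j}\right) = \sum_{j=1}^u c_{i_j}\, y_{i_1}\cdots \widehat{y_{i_j}} \cdots y_{i_u},$$
where $\mathbf{c}$ ranges over nonzero codewords of $\C$ with support $\{i_1,\dots,i_u\}$. Each such $\partial(\cdot)$ is a nonzero homogeneous polynomial of degree $u-1 = wt(\mathbf{c})-1$: the $u$ monomials $y_{i_1}\cdots \widehat{y_{i_j}}\cdots y_{i_u}$ are pairwise distinct, and every $c_{i_j}$ is nonzero by the support hypothesis.

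Since ${\rm IOT}(\C^{\perp})$ is generated by these homogeneous polynomials and every nonzero codeword has weight at least $d$, the ideal is generated in degrees $\geq d-1$; hence $\alpha({\rm IOT}(\C^{\perp})) \geq d-1$. Conversely, picking any minimum-weight codeword $\mathbf{c}^* \in \C$ of weight $d$ produces an explicit nonzero element of ${\rm IOT}(\C^{\perp})$ in degree $d-1$, forcing $\alpha({\rm IOT}(\C^{\perp})) \leq d-1$. Combining these two bounds yields $d = \alpha({\rm IOT}(\C^{\perp})) + 1$. No genuine obstacle arises; the conceptual point worth emphasizing is that, in contrast with the generator-matrix setup of \cite{GaTo}, working with a \emph{parity-check} matrix of $\C$ makes codewords of $\C$ correspond to linear dependencies in $\mathcal A(\C^{\perp})$, which is precisely what aligns the initial degree of the Orlik-Terao ideal with $d-1$.
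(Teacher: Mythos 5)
Your proof is correct and follows essentially the same argument as the paper's: both rely on Remark~\ref{rem2}(i) to identify the generators of ${\rm IOT}(\C^{\perp})$ with boundaries of elements of the relation space $F(\C^{\perp})$, and both observe that these generators have degrees one less than the sizes of the corresponding dependencies, with minimum achieved at $d-1$. The only cosmetic difference is that you phrase this in codeword language (identifying $F(\C^{\perp})$ with $\C$ and using weights) while the paper phrases it in matroid language (circuits of the matroid of the parity-check matrix), which are equivalent descriptions of the same data.
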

\begin{proof} Let $H$ be a parity-check matrix of $\C$. By Remark \ref{rem1.1} (ii), any $d-1$ columns of $H$ are linearly independent, and there exist $d$ columns of $H$ that are linearly dependent.

Using \cite{JoVe1} or \cite{JuPe} language, the matroid of $H$ (or the hyperplane arrangement $\mathcal A(\C^{\perp})$) has the smallest circuit being of size $d$. This circuit comes from a dependency $a_1\ell_{i_1}+\cdots+a_d\ell_{i_d}=0, a_j\neq 0$ among $d$ defining linear forms for $\mathcal A(\C^{\perp})$, and this dependency corresponds to the element $D:=a_1y_{i_1}+\cdots+a_dy_{i_d}\in F({\mathcal C}^{\perp})$.

Immediately we have $\partial(D)$ is an element of degree $d-1$ of ${\rm IOT}(\C^{\perp})$. But, by Remark \ref{rem2} (i), this ideal is generated by $\partial$ of dependencies corresponding to dependent sets, and since there are no smaller size dependent sets of the matroid of $H$, we have $\alpha({\rm IOT}(\C^{\perp}))=d-1$.
\end{proof}

\begin{exm}\label{exm8} For all the calculations in this example, we are using \cite{GrSt}.

Let $\C$ be a linear code over $\mathbb F_3$ with generator matrix
$$G=\left[\begin{array}{ccccccc}1&0&0&1&1&1&1\\ 0&1&0&0&1&1&0\\ 0&0&1&0&0&2&1\end{array}\right].$$ The corresponding parity-check matrix is $$H=\left[\begin{array}{ccccccc}2&0&0&1&0&0&0\\ 2&2&0&0&1&0&0\\ 2&2&1&0&0&1&0\\ 2&0&2&0&0&0&1\end{array}\right].$$

The defining linear forms of $\mathcal A(\C^{\perp})$ in $R=\mathbb F_3[x_1,x_2,x_3,x_4]$ are
$$\ell_1=2(x_1+x_2+x_3+x_4), \ell_2=2(x_2+x_3), \ell_3=x_3+2x_4, \ell_4=x_1, \ell_5=x_2, \ell_6=x_3, \ell_7=x_4.$$

Let $S:=\mathbb F_3[y_1,\ldots,y_7]$. Then, by using Remark \ref{rem2} (ii), the Orlik-Terao ideal ${\rm IOT}(\C^{\perp})$  has generators
\begin{eqnarray}
&&\partial(y_3-y_6+y_7):= y_3y_6-y_3y_7+y_6y_7,\nonumber\\
&&\partial(y_2+y_5+y_6):= y_2y_5+y_2y_6+y_5y_6,\nonumber\\
&&\partial(y_1-y_2+y_4+y_7):= y_1y_2y_4+y_1y_2y_7-y_1y_4y_7+y_2y_4y_7,\nonumber\\
&&\partial(y_2+y_3+y_5+y_7):= y_2y_3y_5+y_2y_3y_7+y_2y_5y_7+y_3y_5y_7,\nonumber
\end{eqnarray} and some elements of degree four which are boundaries of relations corresponding to any linear dependency of five defining linear forms; $\mathcal A(\C^{\perp})$ has rank $7-3=4$, so any five of the defining linear forms are linearly dependent. Also note that $$\partial(y_2+y_3+y_5+y_7)=(y_3+y_7)\partial(y_2+y_5+y_6)-(y_2+y_5)\partial(y_3-y_6+y_7),$$ so this element will not show up in the set of minimal generators.

It is clear that $\alpha({\rm IOT}(\C^{\perp}))=2$, and therefore the minimum distance of $\C$ is $d=2+1=3$.
\end{exm}

\section{Second generalized Hamming weight and syzygies of Orlik-Terao ideals}

\subsection{Minimal graded free resolution.} Let $I$ be a homogeneous ideal of $S=\mathbb K[y_1,\ldots,y_n]$, the standard graded ring of polynomials in variables $y_1,\ldots,y_n$ with coefficients in a field $\mathbb K$.

Let $$0\rightarrow {\bf F}_p\stackrel{\phi_p}\longrightarrow \cdots \stackrel{\phi_3}\longrightarrow {\bf F}_{2}\stackrel{\phi_2}\longrightarrow {\bf F}_1\stackrel{\phi_1}\longrightarrow {\bf F}_0=S\stackrel{\phi_0}\longrightarrow S/I\rightarrow 0$$ be the minimal graded free resolution of $S/I$, as a graded $S$-module. For each $0\leq i\leq p$, we have ${\bf F}_i=\oplus_j S(-j)^{\beta_{i,j}(S/I)}$ as a graded free $S$-module. The numbers $\beta_{i,j}(S/I)$ are called {\em graded Betti numbers}, and for each $i$, except for finitely many, all $\beta_{i,j}(S/I)$ are zero; if $i=0$, we have $\beta_{0,0}(S/I)=1$ and $\beta_{0,j}(S/I)=0$, for all $j\neq 0$. The length of the minimal free resolution is called the {\em projective dimension}, i.e., $p={\rm pdim}_S(S/I)$.

For $1\leq i\leq p$, we say that ${\bf F}_i$ {\em sits at step $i$ in the free resolution of $S/I$} (or, {\em sits in $i$-th homological degree of $S/I$}), and an element of $\ker(\phi_{i-1})\subseteq{\bf F}_i$ is called an {\em $(i-1)$-th syzygy of $I$}.

For each $1\leq i\leq p$, denote $$t_i(I):=\min\{j|\beta_{i,j}(S/I)\neq 0\}.$$ Note that $t_1(I)=\alpha(I)$. Because of the minimality of the free resolution, we have that the sequence of these minimal shifts, $t_1(I),\ldots,t_p(I)$, is strictly increasing and $t_1(I)>0$. Similarly, one can define the sequence of maximal shifts, $T_1(I),\ldots, T_p(I)$. Clearly $T_i(I)\geq t_i(I)$, for $i=1,\ldots,p$.

The value $\max\{T_i(I)-i|i=1,\ldots,p\}$ is the {\em (Castelnuovo-Mumford) regularity}, denoted ${\rm reg}(S/I)$.

The importance of the minimal and maximal shifts in the free resolution is transparent in the famous Multiplicity Conjecture (due to Herzog, Huneke and Srinivasan): if $S/I$ is Cohen-Macaulay (i.e., ${\rm pdim}_S(S/I)={\rm ht}(I)=p$), then $$\frac{\prod_{i=1}^pt_i(I)}{p!}\leq\deg(I)\leq \frac{\prod_{i=1}^pT_i(I)}{p!}.$$

\medskip

In this section we will focus on $t_2(I)$.

\medskip

\begin{exm}\label{exm10} Let us investigate Example \ref{exm8} a bit further. The circuits of the matroid of the matrix $H$ are

$$\{3,6,7\}, \{2,5,6\}, \{1,2,4,7\}, \{2,3,5,7\}, \{1,2,3,4,5\}, \{1,2,3,4,6\},$$ $$\{1,3,4,5,6\}, \{1,3,4,5,7\}, \{1,4,5,6,7\},$$ and the minimal graded free resolution of the Stanley-Reisner ring $S/I$ is

$$0\rightarrow S^7(-7)\rightarrow S^2(-5)\oplus S^{13}(-6)\rightarrow S^2(-3)\oplus S^2(-4)\oplus S^5(-5)\rightarrow S\rightarrow S/I\rightarrow 0.$$ Here, $S=\mathbb K[y_1,\ldots,y_7]$ and $I$ is generated by the square-free monomials $\displaystyle \prod_{i\in C}y_i$, for any $C$ circuit in the above list.

By \cite[Theorem 4.2]{JoVe1}, the generalized Hamming weights of $\C$ are the smallest shifts at each step in the resolution: $$d_1(\C)=3, d_2(\C)=5, d_3(\C)=7.$$

On the other hand, the minimal graded free resolution of the Orlik-Terao algebra ${\rm OT}(\C^{\perp})$ is

$$0\rightarrow S^3(-6)\rightarrow S(-4)\oplus S^5(-5)\rightarrow S^2(-2)\oplus S(-3)\oplus S(-4)\rightarrow S\rightarrow {\rm OT}(\C^{\perp})\rightarrow 0.$$ Observe that for $r=1,2,3$, $$t_r({\rm IOT}(\C^{\perp}))+1=d_r(\C).$$
\end{exm}

\medskip

A standard technique to study any homogeneous ideal is to turn the attention towards its initial ideal. The initial ideals of Orlik-Terao ideals are Stanley-Reisner ideals of the corresponding broken-circuit complexes (see, \cite{PrSp}), and they were successfully used in some of the proofs in \cite{ScTo}. But, as we will see in the next example, they cannot help to calculating $d_2(\C)$.

\begin{exm}\label{exm12} We will look again at Examples \ref{exm8} and \ref{exm10}. In \cite{GrSt}, the default monomial order is the graded reverse lexicographic order. We will use the same, but we will order the variables of $S$ differently each case.

Above we listed the basis of all circuits, and to obtain the broken-circuits, we have to remove the ``smallest'' element under a fixed ordering of $\{1,\ldots,7\}$ from each circuit.

\begin{itemize}
  \item If we order the variables $y_1>y_2>y_3>y_4>y_5>y_6>y_7$, then this ideal is $$I_1:=\langle y_3y_6, y_2y_5, y_1y_2y_4, y_1y_4y_5y_6, y_1y_3y_4y_6\rangle,$$ with $S/I_1$ having the minimal graded free resolution
      $$0\rightarrow S^3(-6)\rightarrow S^2(-4)\oplus S^5(-5)\rightarrow S^2(-2)\oplus S(-3)\oplus S^2(-4)\rightarrow S\rightarrow S/I_1\rightarrow 0.$$
  \item If we order the variables $y_1>y_2>y_3>y_4>y_6>y_7>y_5$, then this ideal is $$I_2:=\langle y_3y_6, y_2y_6, y_2y_3y_7, y_1y_2y_4, y_1y_4y_6y_7, y_1y_3y_4y_7\rangle,$$ with $S/I_2$ having the minimal graded free resolution
      $$0\rightarrow S^3(-6)\rightarrow S(-3)\oplus S^2(-4)\oplus S^5(-5)\rightarrow S^2(-2)\oplus S^2(-3)\oplus S^2(-4)\rightarrow S\rightarrow S/I_2\rightarrow 0.$$
\end{itemize}

As we can see, $t_2(I_1)=4$ and $t_2(I_2)=3$, obviously different numbers.

As a side note, in terms of Hilbert series (and functions) of the corresponding algebras, by \cite{Mac}, we have equality throughout:

$$HS({\rm OT}(\C^{\perp}),s)=HS(S/I_1,s)=HS(S/I_2,s).$$ Furthermore, it is worth mentioning that, in general, the Hilbert series of the Orlik-Terao algebra equals the Poincar\'{e} polynomial of the hyperplane arrangement evaluated at $s/(1-s)$; see \cite{Te} (for $\mathbb K$ a field of characteristic 0) and \cite{Be} (for any field $\mathbb K$). In our example, $\displaystyle HS({\rm OT}(\C^{\perp}),s)=\frac{1+3s+4s^2+3s^3}{(1-s)^4}$.
\end{exm}

\begin{thm}\label{thm13} Let $\mathcal C$ be an $[n,k,d]$-linear code, with $d\geq 2$. Then, the second generalized Hamming weight satisfies

$$t_2({\rm IOT}(\C^{\perp}))+1\leq d_2(\C)\leq t_2({\rm IOT}(\C^{\perp}))+2.$$
\end{thm}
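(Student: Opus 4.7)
The plan is to prove the two inequalities separately. The starting point is the matroid-theoretic identity $d_2(\mathcal C) = \min\{|C_1\cup C_2| : C_1\neq C_2 \text{ circuits of the matroid of } H\}$, which follows from Wei's formula: by taking two minimum-support dependencies in a minimum witness set $I$, minimality of $|I|$ forces $I = C_1\cup C_2$. One then relates each side to first syzygies of the circuit-boundary generators $\partial(D_C)$ of ${\rm IOT}(\mathcal C^{\perp})$.

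For $t_2({\rm IOT}(\mathcal C^{\perp}))+1\leq d_2(\mathcal C)$, I would fix two distinct circuits $C_1,C_2$ with $|C_1\cup C_2|=d_2(\mathcal C)$ and corresponding dependencies $D_1, D_2\in F(\mathcal C^{\perp})$, and exhibit a nontrivial first syzygy of degree $\leq d_2(\mathcal C)-1$. If $C_1\cap C_2\neq\emptyset$, pick $e\in C_1\cap C_2$ and form $D_3 := a_{2,e}D_1 - a_{1,e}D_2$, a nonzero dependency with $\supp(D_3)\subseteq (C_1\cup C_2)\setminus\{e\}$. Setting $J := C_1\cup C_2$ and $\partial_J(D) := \bigl(\prod_{i\in J\setminus\supp(D)} y_i\bigr)\,\partial(D)$, the linearity of $\partial_J$ in $D$ gives $\partial_J(D_3) = a_{2,e}\partial_J(D_1)-a_{1,e}\partial_J(D_2)$, a nontrivial first syzygy of degree $|J|-1 = d_2(\mathcal C)-1$ (after expanding $\partial(D_3)$ in minimal generators if $\supp(D_3)$ is not itself a circuit, which preserves the total degree). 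If $C_1\cap C_2=\emptyset$, the tautological relation $\partial(D_2)\partial(D_1)-\partial(D_1)\partial(D_2) = 0$ is a first syzygy of degree $|C_1|+|C_2|-2 = d_2(\mathcal C)-2$. Either way $t_2({\rm IOT}(\mathcal C^{\perp})) \leq d_2(\mathcal C)-1$.

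For $d_2(\mathcal C) \leq t_2({\rm IOT}(\mathcal C^{\perp}))+2$, take a minimal first syzygy $\sigma = \sum_\alpha r_\alpha \partial(D_{C_\alpha}) = 0$ of degree $t_2 := t_2({\rm IOT}(\mathcal C^{\perp}))$ in a minimal generating set of ${\rm IOT}(\mathcal C^{\perp})$. Since $S$ is a domain, $\sigma$ has at least two nonzero summands, and for any pair $(C_\alpha, C_\beta)$ of circuits appearing in $\sigma$ the tautological Koszul relation $\partial(D_{C_\beta})\partial(D_{C_\alpha})-\partial(D_{C_\alpha})\partial(D_{C_\beta})=0$ forces $t_2 \leq |C_\alpha|+|C_\beta|-2$. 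The aim is to produce some pair in the support of $\sigma$ satisfying the sharper bound $|C_\alpha\cup C_\beta|\leq t_2+2$, which via the identity for $d_2(\mathcal C)$ delivers $d_2(\mathcal C) \leq t_2+2$. The argument splits into subcases: when $\sigma$ reduces (modulo lower-degree syzygies) to a two-generator syzygy, the two-generator syzygy module over the UFD $S$ is generated by the Koszul relation divided by $\gcd(\partial(D_{C_\alpha}),\partial(D_{C_\beta}))$, and a direct analysis of circuit boundaries shows this gcd has low enough degree to force $|C_\alpha\cup C_\beta|\leq |C_\alpha|+|C_\beta|=t_2+2$; when $\sigma$ genuinely involves three or more generators, a circuit-elimination argument (the reverse of the lower-bound construction) produces three circuits $C_1, C_2, C_3$ with $C_3\subseteq C_1\cup C_2$ participating in a degree-$(|C_1\cup C_2|-1)$ subsyzygy, giving $|C_1\cup C_2|=t_2+1$.

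The main obstacle is the upper bound: a minimum-degree first syzygy of ${\rm IOT}(\mathcal C^{\perp})$ need not a priori decompose into Koszul or circuit-elimination pieces, and the heart of the argument is reducing any such syzygy to one of these two canonical types. This leverages the combinatorial first-syzygy analysis of Orlik-Terao ideals from \cite[Section 3]{ScTo}. Examples~\ref{exm10} and~\ref{exm15} confirm both $d_2 = t_2+1$ and $d_2 = t_2+2$ are attained, so the spread is sharp.
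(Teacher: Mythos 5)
Your lower bound matches the paper's argument in essence. The circuit identity $d_2(\C) = \min\{|C_\alpha\cup C_\beta| : C_\alpha\neq C_\beta \text{ circuits of } H\}$ is correct: one inequality is the paper's ``Claim'' (proved by submodularity of the rank function plus the fact that a proper subset of a circuit is independent), and the other follows exactly as you say from the minimality of the witness set $I$. Your $\partial_J$ gadget with the linearity observation is a tidier rendering of the paper's clearing-of-denominators step, and the Koszul relation handles the disjoint-circuit case identically. So the lower half of the theorem is fine.

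The upper bound $d_2(\C)\leq t_2+2$ is where the proposal breaks down, and the gap is not merely that the reduction is hard to carry out. Your dichotomy is false as stated, and Example~\ref{exm15} of the paper is a counterexample to the $\geq 3$-generator branch. There the minimal syzygy has degree $t_2 = 3$ and involves four circuit boundaries $\partial(D_1),\dots,\partial(D_4)$ on the circuits $\{1,2,4\}$, $\{1,3,5\}$, $\{4,5,6\}$, $\{2,3,6\}$; every pairwise union of these circuits has size $5 = t_2+2$, not $t_2+1 = 4$, and no third circuit is contained in any union of two of them of size $4$. So the ``circuit-elimination'' subsyzygy you invoke, even where a containment $C_3\subseteq C_1\cup C_2$ happens to exist (e.g.\ $\{2,3,4,5\}\subseteq\{1,2,3,4,5\}$ there), lives in degree $|C_1\cup C_2|-1 = 4 > t_2$, so it says nothing about the minimal shift. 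The two-generator branch also does not save you: the Koszul relation on $\partial(D_1),\partial(D_2)$ has degree $4 > 3 = t_2$, so the minimal syzygy is \emph{not} a Koszul relation divided by a gcd. In short, the first syzygy module of an Orlik--Terao ideal is not generated by Koszul pieces and eliminations of the form $D_3 = D_1 - D_2$, and \cite[Section~3]{ScTo} (combinatorial syzygies only) cannot supply the missing structure theorem.

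The paper sidesteps decomposition entirely with a leading-monomial support argument, which is the idea you are missing. Take a minimal first syzygy $\sum_i P_i\,\partial(E_i)=0$ of degree $t = t_2$, fix a term order so that the variables indexed by $J_1$ (the circuit of $E_1$) come first; then the leading monomial of $P_1\partial(E_1)$ has degree $t$ and its support contains $J_1$ minus one index. For the syzygy to vanish, this monomial must occur in some $P_j\partial(E_j)$, whose support then also contains $J_j$ minus one index. Hence $|J_1\cup J_j|-2 \leq t$, and the Claim $|J_1\cup J_j|\geq d_2(\C)$ finishes. This argument makes no attempt to classify syzygies, and, given the counterexample above, some such classification-free route is necessary.
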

\begin{proof} Let $H$ be a parity-check matrix of $\C$; this is an $(n-k)\times n$ matrix of rank $n-k$. From Section \ref{sec:genHam}, we have $$d_2(\mathcal C)=\min_{I\subseteq[n]}\{|I|\,|\, |I|-\dim_{\mathbb K}{\rm Span}_{\mathbb K}\{H_i\,|\, i\in I\}\geq 2\}.$$ For convenience, we denote $H_I:={\rm Span}_{\mathbb K}\{H_i\,|\, i\in I\}$, for any $I\subseteq [n]$. Also, we will drop the subscript $\mathbb K$.

Let $m=d_2(\C)$. If $m=2$, then, since $1\leq d=d_1(\C)<d_2(\C)=m$, forces $d=1$; contradiction. So $m\geq 3$.

After some reordering of the columns of $H$, suppose $J=\{1,\ldots,m\}$ has the property that $|J|-\dim(H_J)=m-\dim(H_J)\geq 2$.

Since $m$ is minimum, for any $i\in J$ we have $|J\setminus\{i\}|-\dim(H_{J\setminus\{i\}})\leq 1$, which gives $|J|-\dim(H_{J\setminus\{i\}})\leq 2$. From $2\leq |J|-\dim(H_J)\leq 2+\dim(H_{J\setminus\{i\}})-\dim(H_J)$, we obtain $\dim(H_J)=\dim(H_{J\setminus\{i\}})$, for all $i\in J$. So $\dim(H_{J\setminus\{i\}})\leq m-2$, and since $|J\setminus\{i\}|=m-1$, we obtain that $J\setminus\{i\}$ is a dependent set for every $i\in J$.

Consider $i=1$, and let $C_1$ be a maximal size circuit in $J\setminus\{1\}$. Suppose $C_1=\{2,\ldots,s\}$, for some $s\leq m$. Let $C_2$ be a maximal size circuit in $J\setminus\{2\}$. Since $C_2\nsubseteq C_1$ and $C_1\nsubseteq C_2$, we have two cases:

\begin{itemize}
  \item[(i)] $C_1\cap C_2\neq \emptyset$. In this case, we suppose $C_2=\{1,3,\ldots,3+u,s+1,\ldots,s+v\}$, for some $0\leq u\leq s-3$ and $0\leq v\leq m-s$.
  \item[(ii)] $C_1\cap C_2=\emptyset$.
\end{itemize}

\underline{Case (i)}. Suppose the dependency of $C_1$ is $$D_1:= a_2y_2+y_3+a_4y_4+\cdots+a_sy_s,$$ with $a_i\in\mathbb K\setminus\{0\}$, and the dependency of $C_2$ is $$D_2:= b_1y_1+y_3+b_4y_4+\cdots+b_{3+u}y_{3+u}+b_{s+1}y_{s+1}+\cdots+b_{s+v}y_{s+v},$$ with $b_j\in\mathbb K\setminus\{0\}$.

Then, after some possible reordering of the columns of $H$, we can assume that $D_3:=D_1-D_2$ becomes the linear dependency that corresponds to a circuit in $J\setminus\{3\}$:

$$-b_1y_1+a_2y_2+(a_t-b_t)y_{t}+\cdots+(a_{3+u}-b_{3+u})y_{3+u}+a_{4+u}y_{4+u}+\cdots+a_sy_s-b_{s+1}y_{s+1}-\cdots -b_{s+v}y_{s+v},$$ for some $4\leq t\leq 3+u$, with $a_t\neq b_t$.

We will use the following convenient notation: if $D:=c_{i_1}y_{i_1}+\cdots+c_{i_k}y_{i_k}$ is a dependency corresponding to a circuit, then we can write $$\frac{\partial(D)}{y_{i_1}\cdots y_{i_k}}=\frac{c_{i_1}}{y_{i_1}}+\cdots+\frac{c_{i_k}}{y_{i_k}}.$$

With this, if in the expression $D_3=D_1-D_2$ we formally replace $y_i$ with its reciprocal $1/y_i$, we obtain

$$\frac{\partial(D_3)}{y_1y_2y_t\cdots y_{s+v}}=\frac{\partial(D_1)}{y_2\cdots y_s}-\frac{\partial(D_2)}{y_1y_3\cdots y_{3+u}y_{s+1}\cdots y_{s+v}}.$$ Clearing denominators we obtain the 1-syzygy

$$y_3\cdots y_{t-1}\partial(D_3)-y_1y_{s+1}\cdots y_{s+v}\partial(D_1)+y_2y_{4+u}\cdots y_s\partial(D_2)=0, $$ of degree $s+v-1$. If it happens that $\partial(D_3)=P_1\partial(D_1)+P_2\partial(D_2)$, for some $P_1,P_2\in S$ of the appropriate degrees (i.e., the syzygy above is trivial), then we immediately obtain the non-trivial 1-syzygy on the minimal generators $\partial(D_1)$ and $\partial(D_2)$:
$$(y_3\cdots y_{t-1}P_1-y_1y_{s+1}\cdots y_{s+v})\partial(D_1)+(y_3\cdots y_{t-1}P_2+y_2y_{4+u}\cdots y_s)\partial(D_2)=0,$$ which, after some possible simplifications has degree $\leq s+v-1$.

From both cases we conclude $$t_2({\rm IOT}(\C^{\perp}))\leq s+v-1\leq s+(m-s)-1=m-1=d_2(\C)-1.$$

\underline{Case (ii)}. Suppose $|C_2|=s'-1\leq m-1$. Since $|C_1|=s-1\leq m-1$, and $C_1\cap C_2=\emptyset$, then $(s-1)+(s'-1)\leq m$.

If $D_1$ is the dependency of $C_1$ and $D_2$ is the dependency of $C_2$, then $\partial(D_1)\in{\rm IOT}(\C^{\perp})$ is of degree $s-2$ and $\partial(D_2)\in{\rm IOT}(\C^{\perp})$ is of degree $s'-2$. The Koszul syzygy $$\partial(D_2)\cdot\partial(D_1)-\partial(D_1)\cdot\partial(D_2)=0$$ produces a 1-syzygy of ${\rm IOT}(\C^{\perp})$ of degree $(s-2)+(s'-2)$. This implies $$t_2({\rm IOT}(\C^{\perp}))\leq s+s'-4\leq m-2=d_2(\C)-2.$$

Both cases above give the inequality $$t_2({\rm IOT}(\C^{\perp}))+1\leq d_2(\C).$$

\bigskip

For the other inequality, first denote $t:=t_2({\rm IOT}(\C^{\perp}))$. Suppose there is a 1-syzygy of degree $t$ on some minimal generators $\partial(E_1),\ldots,\partial(E_s)$ of the Orlik-Terao ideal of $\C^{\perp}$:

$$(*) \, P_1\partial(E_1)+\cdots+P_s\partial(E_s)=0,$$ with $P_1,\ldots,P_s$ nonzero homogeneous polynomials in $S$, and such that $\deg(P_i)+\deg(\partial(E_i))=t$, for all $i=1,\ldots,s$.

Each $E_i$ is the minimal dependency corresponding to a circuit $J_i$ of size $|J_i|=\deg(\partial(E_i))+1$, which we will denote with $e_i$.

\medskip

\noindent\underline{Claim:} For $i\neq j$, since $J_i$ and $J_j$ are distinct circuits, then $|J_1\cup J_2|\geq d_2(\C)$.

\underline{Proof of Claim:} We have $|J_i|-\dim(H_{J_i})\geq 1$ and $|J_j|-\dim(H_{J_j})\geq 1$. So,

$$|J_i\cup J_j|+|J_i\cap J_j|=|J_i|+|J_j|+2+\dim(H_{J_i})+\dim(H_{J_j}).$$ Since $\dim(H_{J_i})+\dim(H_{J_j})\geq \dim(H_{J_i\cup J_j})+\dim(H_{J_i\cap J_j})$, we then have
$$|J_i\cup J_j|-\dim(H_{J_i\cup J_j})\geq 2-(|J_i\cap J_j|-\dim(H_{J_i\cap J_j})).$$ But $J_i\cap J_j\subsetneq J_i$ and $J_i\cap J_j\subsetneq J_j$, and so, $J_i\cap J_j$ is either the empty set or it is an independent set. In either case, we have $|J_i\cap J_j|=\dim(H_{J_i\cap J_j})$, and therefore, $$|J_i\cup J_j|-\dim(H_{J_i\cup J_j})\geq 2.$$ From Wei's formula (see Section \ref{sec:genHam}), we have $|J_i\cup J_j|\geq d_2(\C)$.

\medskip

If $M\in S$ is a monomial, then define {\em the support of $M$} to be $${\rm supp}(M):=\{i\, |\, y_i \mbox{ divides } M\}.$$

Let us go back to the syzygy $(*)$. Suppose $J_1=\{1,\ldots,e_1\}$. Suppose we picked a monomial order on $S$ with $y_1>\cdots>y_n$. Then, the leading monomial of $P_1\partial(E_1)$, which is $M_1y_1\cdots y_{e_1-1}$, in order to get cancelled, it must occur as some monomial in the expressions of at least one of $P_2\partial(E_2),\ldots,P_s\partial(E_s)$. Say it shows up in $P_2\partial(E_2)$. Suppose $J_2=\{i_1,\ldots,i_{e_2}\}$, and suppose that this monomial is $M_2y_{i_1}\cdots \widehat{y_{i_{\nu}}}\cdots y_{i_{e_2}}$, for some $\nu\in\{1,\ldots,{e_2}\}$.

We have $$(J_1\cup J_2)\setminus\{e_1, i_{\nu}\}\subseteq (J_1\setminus\{e_1\})\cup(J_2\setminus\{i_{\nu}\})\subseteq {\rm supp}(M_1y_1\cdots y_{e_1-1})={\rm supp}(M_2y_{i_1}\cdots \widehat{y_{i_{\nu}}}\cdots y_{i_{e_2}}).$$

This gives $$|J_1\cup J_2|-2\leq |(J_1\cup J_2)\setminus\{e_1, i_{\nu}\}|\leq |{\rm supp}(M_1y_1\cdots y_{e_1-1})|\leq \deg(M_1y_1\cdots y_{e_1-1})=t.$$ Therefore, from Claim above we have the inequality $$d_2(\C)\leq t+2.$$
\end{proof}

\begin{exm}\label{exm15} If in Example \ref{exm10} we showed that the lower bound is attained, in this example we show that the upper bound is attained.

Let $\mathcal C$ be the linear code over a field $\mathbb K$, of characteristic not 2, with a parity-check matrix

$$H=\left[\begin{array}{rrrrrr}1&0&0&1&1&0\\ 0&1&0&-1&0&1\\ 0&0&1&0&-1&-1\end{array}\right].$$ Some of the circuits are $$\{1,2,4\}, \{1,3,5\}, \{4,5,6\}, \{2,3,6\},$$ with corresponding dependencies:

$$D_1:=-y_1+y_2+y_4, D_2:=-y_1+y_3+y_5, D_3:=y_4-y_5+y_6, D_4:=-y_2+y_3+y_6.$$

The corresponding elements in the Orlik-Terao ideal are

\begin{eqnarray}
\partial(D_1)&=&-y_2y_4+y_1y_4+y_1y_2\nonumber\\
\partial(D_2)&=&-y_3y_5+y_1y_5+y_1y_3\nonumber\\
\partial(D_3)&=&y_5y_6-y_4y_6+y_4y_5\nonumber\\
\partial(D_4)&=&-y_3y_6+y_2y_6+y_2y_3.\nonumber
\end{eqnarray} In fact, one can check that they minimally generate this ideal.

It is not hard to check that on these elements we have a linear syzygy:

$$(-y_3-y_5)\partial(D_1)+(y_2+y_4)\partial(D_2)+(-y_2+y_3)\partial(D_3)+(-y_4+y_5)\partial(D_4)=0.$$ So we have $t_2({\rm IOT}(\mathcal C^{\perp}))=3$.

Now we compute $d_2(\C)$. A generator matrix for $\C=(\C^{\perp})^{\perp}$ is

$$G=\left[\begin{array}{rrrrrr}-1&1&0&1&0&0\\ -1&0&1&0&1&0\\ 0&-1&1&0&0&1\end{array}\right].$$ From Remark \ref{rem1.1} (iii), we have $d_2(\C)$ equals 6 minus the maximum number of columns of $G$ that span a $3-2=1$ vector space: $d_2(\C)=6-1=5$.

Hence, for this example, $$d_2(\C)=t_2({\rm IOT}(\mathcal C^{\perp}))+2.$$
\end{exm}

\begin{rem}\label{lastrem} Regarding the higher generalized Hamming weights we know the following.

By \cite{PrSp}, the Orlik-Terao algebra ${\rm OT}(\C^{\perp})$ is Cohen-Macaulay of projective dimension ${\rm pdim}_S({\rm OT}(\C^{\perp}))={\rm ht}({\rm IOT}(\C^{\perp}))=k$.

Also because of the Cohen-Macaulay property, we have ${\rm reg}({\rm OT}(\C^{\perp}))=T_k({\rm IOT}(\C^{\perp}))-k$.

At the same time, by \cite[Propositions 1.2 and 2.7]{GaSiTo}, we have ${\rm reg}({\rm OT}(\C^{\perp}))=n-k-c$, where $c\geq 1$ is the number of components of $\mathcal A(\C^{\perp})$. Therefore, $$n = T_k({\rm IOT}(\C^{\perp}))+c\geq t_k({\rm IOT}(\C^{\perp}))+c.$$

The dual version of \cite[Corollary 4.3]{JoVe1} says that, if $\C$ is an $[n,k,d]$-linear code, then $$d_k(\C)=n-\ell,$$ where $\ell$ is the number of zero columns in a parity-check matrix of $\C$ (i.e., the number of loops in the matroid of $\C^{\perp}$). Since we assumed that $d\geq 2$, we have $\ell=0$. Everything together leads to $$d_k(\C)\geq t_k({\rm IOT}(\C^{\perp}))+c.$$

A relevant example is the following. Let $\mathcal C$ be the linear code over a field $\mathbb K$, of characteristic not 2, with a parity-check matrix

$$H=\left[\begin{array}{rrrrrrr}1&0&0&0&-1&0&0\\ 0&1&0&0&-1&0&0\\ 0&0&1&0&0&-1&1\\ 0&0&0&1&0&-1&0\end{array}\right].$$ The defining linear forms for $\mathcal A(\C^{\perp})$ are $x_1, x_2, x_3, x_4, -(x_1+x_2), -(x_3+x_4), x_4\in \mathbb K[x_1,\ldots,x_4]$; it is clear that there are $c=2$ components.

The circuits of $H$ are $\{1,2,5\}, \{3,4,6\}, \{3,7\}, \{4,6,7\}$ that will give the minimal generators of ${\rm IOT}(\C^{\perp})\subset S:=\mathbb K[y_1,\ldots,y_7]$:
$$y_2y_5+y_1y_5+y_1y_2, y_4y_6+y_3y_6+y_3y_4, y_7-y_3.$$ With these generators we obtain the minimal graded free resolution:

$$0\rightarrow S(-5)\rightarrow S^2(-3)\oplus S(-4)\rightarrow S(-1)\oplus S^2(-2)\rightarrow S\rightarrow {\rm OT}(\C^{\perp})\rightarrow 0.$$ We have $$t_1({\rm IOT}(\C^{\perp}))=1, t_2({\rm IOT}(\C^{\perp}))=3, t_3({\rm IOT}(\C^{\perp}))=5=T_3({\rm IOT}(\C^{\perp})).$$

The Stanley-Reisner ideal of $H$ is $I=\langle y_1y_2y_5, y_3y_4y_6, y_3y_7, y_4y_6y_7\rangle$, and if we use this to compute the generalized Hamming weights of $\C$ we obtain:

$$d_1(\C)=2=t_1({\rm IOT}(\C^{\perp}))+1, d_2(\C)=4=t_2({\rm IOT}(\C^{\perp}))+1, d_3(\C)=7=t_3({\rm IOT}(\C^{\perp}))+2.$$

\medskip

At this moment, we can only conjecture that if $\mathcal C$ is an $[n,k,d]$-linear code with $d\geq 2$, then, for any $1\leq r\leq k$, we have $$d_r(\C)\geq t_r({\rm IOT}(\C^{\perp}))+1.$$
\end{rem}

\vskip 0.2in

\renewcommand{\baselinestretch}{1.0}
\small\normalsize 

\bibliographystyle{amsalpha}

\begin{thebibliography}{10}

\bibitem{Aug} D. Augot,
        {\em Description of minimum weight codewords of cyclic codes by algebraic system},
        Finite Fields Appl. {\bf 2} (1996), 138-–152.

\bibitem{Be} A. Berget,
        {\em Products of linear forms and Tutte polynomials},
        European J. Combin. {\bf 31}(2010), 1924–-1935.

\bibitem{BuPe2} S. Bulygin and R. Pellikaan,
        {\em Decoding and finding the minimum distance with Gröbner bases : history and new insights}, pp. 585--622 in: I. Woungang, S. Misra, S.C. Misra (Eds.), Series on Coding Theory and Cryptology vol. 7, Selected Topics in Information and Coding Theory, World Scientific 2010.

\bibitem{DePe} M. De Boer and R. Pellikaan,
        {\em Grobner Bases for Codes}, in Some Tapas of Computer Algebra,
        pp. 237--259, Springer, Berlin 1999.

\bibitem{DeGaTo} G. Denham, M. Garrousian and \c{S}. Toh\v{a}neanu,
        {\em Modular decomposition of the Orlik-Terao algebra of a hyperplane arrangement},
        Annals of Combinatorics {\bf 18}(2014), 289--312.

\bibitem{DiFoKuTo} M. DiPasquale, L. Fouli, A. Kumar and \c{S}. Toh\v{a}neanu,
        {\em Generalized Hamming weights and symbolic powers of Stanley-Reisner ideals}, work in progress, 2024.

\bibitem{Ei0} D. Eisenbud,
        Commutative Algebra with a View Toward Algebraic Geometry,
        Springer-Verlag, New York 1995.

\bibitem{GaTo} M. Garrousian and \c{S}. Toh\v{a}neanu,
        {\em Minimum distance of linear codes and the $\alpha$-invariant}, Adv. Applied Math. {\bf 71} (2015), 190--207.

\bibitem{GaSiTo} M. Garrousian, A. Simis and \c{S}. Toh\v{a}neanu,
        {\em A blowup algebra for hyperplane arrangements}, Algebra Number Theory {\bf 12} (2018), 1401--1429.

\bibitem{GrSt} D. Grayson and M. Stillman, {\em Macaulay2, a software system for research in algebraic geometry},
        Available at http://www.math.uiuc.edu/Macaulay2/.

\bibitem{Hoc} M. Hochster,
        {\em Cohen-Macaulay rings, combinatorics, and simplicial complexes}, in Ring Theory II (B.R. McDonald and R. Morris, Eds.), Lect. Notes in Pure and Appl. Math., No. {\bf 26}, Dekker, New York 1977, pp. 171--223.

\bibitem{JoVe1} T. Johnsen and H. Verdure,
        {\em Hamming weights and Betti numbers of Stanley-Reisner rings associated to matroids},
        Appl. Algebra Eng. Commun. Comput. {\bf 24} (2013), 73--93.

\bibitem{JuPe} R. Jurrius and R. Pellikaan,
        {\em Codes, arrangements and matroids},
        in Series on Coding Theory and Cryptology vol. 8, Algebraic Geometry Modeling in Information Theory,
        E. Martínez-Moro Ed., pp. 219--325, World Scientific 2013.

\bibitem{Mac} F. S. Macaulay,
        {\em Some properties of enumaration in the theory of modular systems},
        Proc. London Math. Soc. {\bf 26} (1927), 531--555.

\bibitem{MaMaSu} I. Marquez-Corbella, E. Martinez-Moro and E. Suarez-Canedo,
        {\em On the ideal associated to a linear code},
        Adv. Math. Commun. {\bf 10} (2016), 229--254.

\bibitem{PrSp} N. Proudfoot and D. Speyer,
        {\em A broken circuit ring},
        Beitr\"{a}ge Algebra Geom. {\bf 47} (2006), 161--166.

\bibitem{ScTo} H. Schenck and \c{S}. Toh\v{a}neanu,
        {\em The Orlik-Terao algebra and 2-formality},
        Math. Res. Lett. {\bf 16}(2009), 171--182.

\bibitem{OrTe} P. Orlik and H. Terao,
        {\em Commutative algebras for arrangements},
        Nagoya Math. J. {\bf 134} (1994), 65--73.

\bibitem{Te} H. Terao,
        {\em Algebras generated by reciprocals of linear forms},
        J. Algebra {\bf 250}(2002), 549--558.

\bibitem{To1} \c{S}. Toh\v{a}neanu,
        {\em On the De Boer-Pellikaan method for computing minimum distance},
        J. Symbolic Comput. {\bf 45}(2010), 965--974.

\bibitem{Va} A. Vardy,
        {\em The intractability of computing the minimum distance of a code},
        IEEE Trans. Inform. Theory {\bf 43} (1997), 1757--1766.

\bibitem{We} V. Wei,
       {\em Generalized Hamming Weights for Linear Codes},
       IEEE Trans. Inform. Theory \textbf{37} (1991), 1412--1413.

\end{thebibliography}

\end{document}